\newcommand{\bd}{\begin{definition}}                
\newcommand{\ed}{\end{definition}}                  
\newcommand{\bc}{\begin{corollary}}                 
\newcommand{\ec}{\end{corollary}}                   
\newcommand{\bl}{\begin{lemma}}                     
\newcommand{\el}{\end{lemma}}                       
\newcommand{\bp}{\begin{proposition}}            
\newcommand{\ep}{\end{proposition}}                
\newcommand{\bere}{\begin{remark}}                  
\newcommand{\ere}{\end{remark}}                     
\newcommand{\bt}{\begin{theorem}}
\newcommand{\et}{\end{theorem}}
\newcommand{\be}{\begin{equation}}
\newcommand{\ee}{\end{equation}}
\newcommand{\bit}{\begin{itemize}}
\newcommand{\eit}{\end{itemize}}
\newtheorem{theorem}{Theorem}[section]
\newtheorem{corollary}[theorem]{Corollary}
\newtheorem{lemma}[theorem]{Lemma}
\newtheorem{proposition}[theorem]{Proposition}
\theoremstyle{definition}
\newtheorem{definition}[theorem]{Definition}
\theoremstyle{remark}
\newtheorem{remark}[theorem]{Remark}
\newtheorem{example}[theorem]{Example}
\begin{document}

\title{Causally simple inextendible spacetimes are hole-free}


\author{E. Minguzzi}
\email[]{ettore.minguzzi@unifi.it}
\affiliation{Dipartimento di Matematica Applicata ``G. Sansone'',
Universit\`a degli Studi di Firenze, Via S. Marta 3, I-50139
Firenze, Italy.}



\begin{abstract}
It is shown that causally simple inextendible spacetimes are
hole-free, thus confirming the expectation that causal simplicity
removes holes from spacetime. This result is optimal in the sense
that causal simplicity cannot be weakened to causal continuity.
Physically, it means that if there is some partial Cauchy
hypersurface which, for some reason, does not fully develop its
influence, then there is some  discontinuity in the causal relation.
\end{abstract}

\pacs{}

\maketitle

\section{Introduction}

With the general theory of relativity we learned to represent events
as points of a connected Hausdorff time oriented Lorentzian
manifold. We call these manifolds {\em spacetimes}. However, it is
easy to recognize that this category of manifolds is too large.
Indeed, removing points from one spacetime leaves us with another
spacetime, thus, taking the mathematical model seriously, one should
conclude that spacetime might suddenly end without giving any
warning to the observers living in it. It seems that to accept this
type of drastic behavior would mean to dismiss the possibility of
making predictions. Therefore, there is a certain agreement that the
above notion of spacetime should be narrowed with the imposition of
additional conditions. One  such condition is {\em inextendibility},
namely, the Lorentzian manifold should not be a subset of a larger
one. Physically, this condition assures that spacetime will not end
for some observer (i.e. timelike worldline) if there is the
possibility of hosting the observer by enlarging the manifold.

Another, less known, but soundly motivated condition is that of {\em
hole-freeness}. An inextendible spacetime is hole-free if the domain
of dependence of any partial Cauchy hypersurface cannot be extended
by imbedding it into a different spacetime. In essence it states
that any Cauchy development on spacetime must be maximal (in the
sense of partial orders) among the Cauchy developments of that
hypersurface (in possibly different spacetimes). Philosophically
speaking, it means that spacetime evolves as much as possible,
without `forgetting' any event that it can possibly determine. This
idea was introduced by Geroch in order to identify some pathological
spacetimes which could not be physically dismissed on other grounds.
According to Geroch, general relativity must be complemented with
the condition of hole-freeness.

The inextendibility condition allows us to discard spacetimes which
are proper subsets of physically reasonable spacetimes. The
hole-free condition has a similar effect but can be formulated
independently of the  inextendibility condition (see the examples of
section \ref{exa}). Since both should be imposed by fiat, they
should most often appear in conjunction.

Causal simplicity is another condition which, intuitively, forbids
the removal of points from spacetime. Indeed, the levels of the
causal ladder \cite{hawking73,minguzzi06c} above stable causality,
such as causal continuity, causal simplicity and global
hyperbolicity appear as increasingly refined conditions which
decrease the influence of points at infinity on spacetime, and
hence, in particular, of those points at infinity obtained by
removing interior spacetime points. Causal simplicity seems to work
perfectly fine on known examples although, contrary to
hole-freeness, it cannot be imposed by fiat and could fail even in
physically meaningful spacetimes (e.g. plane-fronted waves).

In this work I wish to show that, as examples suggest, causal
simplicity together with inextendibility implies hole-freeness. I
will also comment on previous variants on the definition of
hole-free spacetime.

I denote with $(M,g)$ a spacetime (connected, Hausdorff,
time-oriented Lorentzian manifold without boundary), of arbitrary
dimension $n\geq 2$ having a metric $g$ of signature
$(-,+,\dots,+)$.  Unless otherwise specified, all the causal curves
that we shall consider are future directed, thus a past inextendible
curve ends at its future endpoint. Sometimes by {\em geodesic} we
shall actually mean {\em pregeodesic}. The subset symbol $\subset$
is reflexive, $X \subset X$.

\section{Hole-free spacetimes}

The future domain of dependence (or future Cauchy development)
$D^+(S)$ of a set $S$ is made by all the points $p\in M$ for which
any inextendible ($C^1$) causal curve through $p$ intersects $S$.
The set $\tilde{D}^+(S)$ is defined analogously, where {\em causal}
is replaced by {\em timelike}.\cite{hawking73} If $S$ is
topologically closed as a subset of $M$ then
$\overline{D^+(S)}=\tilde{D}^+(S)$. A dual definition of past domain
of dependence $D^-(S)$ can be given. The (total) domain of
dependence is $D(S)=D^+(S)\cup D^{-}(S)$.

An acausal edgeless (and hence topologically closed) set is a {\em
partial Cauchy hypersurface}. For the definition of edge of an
achronal set see\cite{hawking73}. Essentially, the edge is the
boundary of the achronal set $S$ with respect the the topology
induced on an maximal achronal set $A$ containing $S$.

Let $S$ be acausal. It is well known that through each point of
$H^+(S)$ passes a past inextendible lightlike geodesic generator
whose intersection with $\bar{S}$, whenever it exists, belongs to
$\textrm{edge}(S)$ (and dually in the past case).


As a consequence, if $S$ is acausal and does not contain its edge
(e.g. if $S$ is a partial Cauchy hypersurface) then $D(S)$ is open
and non-empty (as it contains $S$), and furthermore $D(S)\cap
H(S)=\emptyset$. We remark that we cannot weaken the acausality  of
$S$ to achronality if we want to assure that $D(S)$ is open.

A spacetime is globally hyperbolic iff we can find a partial Cauchy
hypersurface such that $D(S)=M$. It is causally simple if it is
causal and the sets $J^{\pm}(x)$ are closed for every $x\in M$ (or
equivalently, the causal relation $J^{+}\subset M\times M$ is closed
\cite{minguzzi06c}). It is causally continuous if it is weakly
distinguishing and reflective.\cite{minguzzi07e} Global
hyperbolicity implies causal simplicity which implies casual
continuity.

Any isometry is injective but not necessarily surjective. A
spacetime $(M,g)$ is inextendible if it is not contained in a larger
spacetime $(M',g')$, namely if we cannot find an isometry $\psi:
M\to M'$ such that $\psi(M)\ne M'$.

\begin{definition} \label{gvw}
A spacetime $(M,g)$ has a {\em future Cauchy hole} (or simply a
future hole) if there is a partial Cauchy hypersurface $S$ and an
isometry $\varphi: \tilde{D}(S) \to N$, on a spacetime $(N,\sigma)$,
such that $\varphi(S)$ is acausal and $\varphi(H^+(S))\cap
D^{+}(\varphi(S))\ne \emptyset$. The definition of {\em past Cauchy
hole}  is given dually. A spacetime is {\em Cauchy holed} if it has
a future or a past hole. A spacetime is (future/past) hole-free if
it has no (future/past) hole.
\end{definition}

Intuitively, $S$ generates a future hole if its Cauchy horizon
$H^{+}(S)$, while outside the domain of influence of $S$ in $M$,
enters into the domain of influence in an alternative spacetime,
that is, prediction can be extended passing through the former
horizon.

According to this definition the region $t<0$ of Minkowski spacetime
is not holed (compare with previous definitions, see remark
\ref{rem}).

\begin{proposition} \label{pyn}
Let $S$ be a closed achronal set and let  $\varphi: \tilde{D}(S) \to
N$ be an isometry, then $\varphi(D(S)) \subset {D}(\varphi(S))$ and
$\varphi(\tilde{D}(S)) \subset \tilde{D}(\varphi(S))$.
\end{proposition}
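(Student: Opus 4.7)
The plan is to lift any past-inextendible causal (resp.\ timelike) curve in $N$ through $\varphi(p)$ back to a past-inextendible causal (resp.\ timelike) curve in $M$ through $p$; the hypothesis $p\in D^+(S)$ (resp.\ $\tilde D^+(S)$) then forces the lift to meet $S$, and applying $\varphi$ produces the desired intersection with $\varphi(S)$. It suffices to treat the $+$ case, since the $-$ case is dual and $D=D^+\cup D^-$.

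Fix $p\in D^+(S)$ and let $\gamma\colon (a,0]\to N$ be past-inextendible causal with $\gamma(0)=\varphi(p)$. Because $\varphi$ is an isometry (in particular a local diffeomorphism) and $p\in \tilde D(S)$, near $\varphi(p)$ the curve $\gamma$ lies in $\varphi(\tilde D(S))$ and admits a causal lift $\lambda=\varphi^{-1}\circ\gamma$ into $\tilde D(S)$. Let $(b,0]$ be the maximal past subinterval on which this lift is defined, that is, on which $\gamma$ stays in $\varphi(\tilde D(S))$.

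The crux is showing $\lambda$ is past-inextendible in $M$. Suppose instead that $\lambda(t_k)\to q\in M$ along some sequence $t_k\to b^+$. Because $\tilde D(S)$ is closed in $M$ (the excerpt records $\tilde D^\pm(S)=\overline{D^\pm(S)}$) and $\lambda(t_k)\in\tilde D(S)$, we have $q\in\tilde D(S)$; continuity of $\varphi$ then gives $\gamma(t_k)=\varphi(\lambda(t_k))\to \varphi(q)\in \varphi(\tilde D(S))$. If $b>a$ then also $\gamma(t_k)\to\gamma(b)$, forcing $\gamma(b)=\varphi(q)\in\varphi(\tilde D(S))$ and contradicting the maximality of $(b,0]$. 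If $b=a$ then $\varphi(q)$ would be a past endpoint of $\gamma$ in $N$, contradicting its past-inextendibility. Either way we reach a contradiction, so $\lambda$ is past-inextendible in $M$. Since $p\in D^+(S)$, $\lambda$ must meet $S$, and applying $\varphi$ yields a point of $\gamma\cap\varphi(S)$. This establishes $\varphi(p)\in D^+(\varphi(S))$, and hence the first inclusion.

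The second inclusion is the same argument with ``causal'' replaced by ``timelike'' throughout, the lift remaining timelike because $\varphi$ preserves the causal character of tangent vectors. The main obstacle is the maximality/boundary step above: one must rule out that $\lambda$ stops at a point of $\tilde D(S)$ strictly before meeting $S$, and it is precisely the closedness of $\tilde D(S)$ in $M$ that prevents this, promoting ``$\gamma$ leaves $\varphi(\tilde D(S))$'' into genuine past-inextendibility of the lift.
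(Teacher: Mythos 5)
There is a genuine gap, and it sits exactly where the key idea of the proof has to go: the step ``$\gamma(b)=\varphi(q)\in\varphi(\tilde D(S))$, contradicting the maximality of $(b,0]$'' is a non sequitur. If $b$ is the infimum of the parameters for which $\gamma$ stays in $\varphi(\tilde D(S))$, then showing $\gamma(b)\in\varphi(\tilde D(S))$ only tells you that the maximal interval is $[b,0]$ rather than $(b,0]$; it in no way forces $\gamma(t)\in\varphi(\tilde D(S))$ for $t$ slightly below $b$. The closedness of $\tilde D(S)$ in $M$ is irrelevant to this: what you would need is that $\varphi(\tilde D(S))$ contains a \emph{past neighborhood of $\gamma(b)$ in $N$}, and $\varphi(\tilde D(S))$ is in general not open in $N$ --- indeed $\varphi(q)$ can lie on $\partial\varphi(D^+(S))$ with points of $N$ just outside the image arbitrarily close to it (this is precisely the geometry at a hole, which is the situation the whole paper is about). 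So your dichotomy leaves unexcluded the case in which $\lambda$ terminates at a past endpoint $q\in\tilde D(S)\setminus S$, and then $\lambda$ is not past inextendible and the defining property of $D^+(S)$ cannot be invoked. The same problem already infects your opening assertion that ``near $\varphi(p)$ the curve $\gamma$ lies in $\varphi(\tilde D(S))$'': being an isometry of the closed set $\tilde D(S)$ onto its image does not make that image a neighborhood of $\varphi(p)$.

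The missing ingredient is the local argument the paper uses. For $q\in D^+(S)\setminus S$ there is a neighborhood of $0\in T_qM$ whose past causal vectors exponentiate into $D^+(S)$; since an isometry intertwines exponential maps, the corresponding \emph{intrinsic} past causal cone of $\varphi(q)$ in $N$ lies in $\varphi(D^+(S))$, so every causal curve of $N$ ending at $\varphi(q)$ has a final segment inside $\varphi(D^+(S))$. This is what launches the lift at $\varphi(p)$ and what genuinely contradicts maximality when the lift's would-be past endpoint $q$ lies in $D^+(S)\setminus S$. You must then separately dispose of the remaining case $q\in\tilde D^+(S)\setminus(D^+(S)\cup S)$, e.g.\ by appending to $\lambda|_{[b,0]}$ a past inextendible causal curve through $q$ missing $S$, which contradicts $p\in D^+(S)$ directly. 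With these additions your lifting strategy is essentially the paper's own proof read contrapositively; without them the central claim of past inextendibility of the lift is unsupported.
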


\begin{proof}
Let $q'\in \varphi(D^+(S))\backslash \varphi(S)$, $q'=\varphi(q)$,
$q\in D^{+}(S)\backslash S$. There is a neighborhood of ${0}\in
TM_q$, such that the exponential map of the past non-spacelike
vectors in that neighborhood is contained in $D^+(S)$. By the
isometry, the same is true for $q'$ with respect to
$\varphi(D^+(S))$, in particular there is a neighborhood $U_{q'}$ of
$q'$ such that $U_{q'}\cap J^{-}(q')\subset \varphi(D^+(S))$.
Suppose by contradiction that $q'\notin D^+(\varphi(S))$, then there
is a past inextendible causal curve \mbox{$\gamma':(0,1]\to N$}
ending at $q'$ and not intersecting $\varphi(S)$. A last segment of
$\gamma'$ must be contained in $\varphi(D^+(S))$. The curve
$\gamma:=\varphi^{-1}\circ \gamma'$ is causal and ends at $q$ thus
there is a segment of $\gamma$ which is ending at $q$ and entirely
contained in $D^+(S)$. In fact there is a maximal segment of this
type which provides a past inextendible causal curve ending at $q$
not intersecting $S$, a contradiction. The proof in the past and
timelike cases are analogous.
\end{proof}



If there is a hole the inclusion is strict.

\begin{proposition} \label{muj}
Suppose that $(M,g)$ is (future/past) holed and let $\varphi$ be as
in definition \ref{gvw}, then $\varphi(D(S))\subsetneq
D(\varphi(S))$ (resp. $\varphi(D^{\pm}(S))\subsetneq
D^{\pm}(\varphi(S))$).
\end{proposition}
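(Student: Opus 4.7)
The plan is to leverage Proposition \ref{pyn} to get the inclusion for free, and then exhibit a point in $D^{+}(\varphi(S))\setminus\varphi(D^{+}(S))$ using the defining property of a future hole.

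Concretely, assume $(M,g)$ has a future hole as in Definition \ref{gvw}; the past case is dual. Proposition \ref{pyn} (applied to the partial Cauchy hypersurface $S$, which is closed and achronal) gives
\[
\varphi(D^{+}(S))\subset D^{+}(\varphi(S)),\qquad \varphi(D(S))\subset D(\varphi(S)).
\]
By the defining property of a future hole, there exists $q\in H^{+}(S)$ with $\varphi(q)\in D^{+}(\varphi(S))$. Since $S$ is a partial Cauchy hypersurface, it is acausal and does not contain its edge, so the observation recorded just before the definition of global hyperbolicity yields $D(S)\cap H(S)=\emptyset$; in particular $q\notin D^{+}(S)$. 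Because $\varphi$ is injective (being an isometry), this forces $\varphi(q)\notin\varphi(D^{+}(S))$.

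Combining these two facts, $\varphi(q)$ witnesses the strict inclusion $\varphi(D^{+}(S))\subsetneq D^{+}(\varphi(S))$, and consequently also $\varphi(D(S))\subsetneq D(\varphi(S))$. The past case follows by time duality.

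There is no real obstacle here: the only subtle point is making sure that the disjointness of $D(S)$ and $H(S)$ is available, which is the reason the definition of a future hole insists that $S$ be a partial Cauchy hypersurface (not just a closed achronal set as in Proposition \ref{pyn}); without that extra acausal/edgeless assumption one could not conclude $q\notin D^{+}(S)$ from $q\in H^{+}(S)$.
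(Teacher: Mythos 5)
Your proof is correct and follows essentially the same route as the paper: the inclusion comes from Proposition \ref{pyn}, and the strictness comes from a point of $\varphi(H^{+}(S))\cap D^{+}(\varphi(S))$ (nonempty by the definition of a hole) which cannot lie in $\varphi(D(S))$ because $D(S)\cap H(S)=\emptyset$ and $\varphi$ is injective on $\overline{D(S)}$. Your remark on why the acausal/edgeless hypothesis on $S$ is needed is a fair elaboration of what the paper leaves implicit.
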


\begin{proof}
Indeed, by definition of hole $D(\varphi(S))$ contains some points
of $\varphi(H(S))$ while $\varphi(D(S))$ does not contain any of
them since $\varphi: \overline{D(S)} \to N$ is injective.
\end{proof}

\begin{remark}
Let $\varphi: \tilde{D}(S) \to N$ be an isometry  with $S$ partial
Cauchy hypersurface, and $\varphi(S)$ acausal set. Since $\varphi$
is an isometry, it is a local diffeomorphism and hence it is open.
The set $D(S)$ is open thus $\varphi(D(S))$ is open. In particular,
as $S\subset D(S)$, $\varphi(S)$ is contained in the open set
$\varphi(D(S))$ and no point of $\varphi(S)$ can belong to
$\textrm{edge}(\varphi(S))$ (we recall \cite[Prop. 6.5.2]{hawking73}
that this set coincides with $\textrm{edge}(H(S))$). By hypothesis
the hypersurface $\varphi(S)$ is acausal and furthermore, it does
not contain its edge, thus $D(\varphi(S))$ is open and
$H(\varphi(S))\cap D(\varphi(S))=\emptyset$.
\end{remark}


\begin{proposition} \label{nrx}
Suppose that $(M,g)$ is  inextendible, $S$ is a partial Cauchy
hypersurface and let $\varphi: \tilde{D}(S) \to N$ be an isometry on
a spacetime $(N,\sigma)$, such that $\varphi(S)$ is acausal. Then
\[[D^{+}(\varphi(S))\cap \partial \varphi(D^+(S))]\backslash
\varphi(S)= D^{+}(\varphi(S))\cap\overline{\varphi(H^+(S))},\] and
if $M$ is future holed then $S$, $(N,\sigma)$ and $\varphi$ can be
chosen in such a way that this set is non-empty.
\end{proposition}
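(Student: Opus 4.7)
The plan is to establish the set equality by double inclusion and then read off the non-emptiness claim.

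For $(\supseteq)$: let $q' \in D^+(\varphi(S)) \cap \overline{\varphi(H^+(S))}$. First I would show $q' \notin \varphi(S)$. By the remark preceding the proposition, $\varphi(D(S))$ is open in $N$ and contains $\varphi(S)$; by injectivity it is disjoint from $\varphi(H^+(S))$, hence also from $\overline{\varphi(H^+(S))}$, so $\varphi(S) \cap \overline{\varphi(H^+(S))} = \emptyset$. To place $q'$ in $\partial \varphi(D^+(S))$ I would use $H^+(S) \subset \overline{D^+(S)}$ and continuity of $\varphi$: for $p \in H^+(S)$, pick $p_n \in D^+(S)$ with $p_n \to p$, so that $\varphi(p) = \lim \varphi(p_n) \in \overline{\varphi(D^+(S))} \setminus \varphi(D^+(S)) \subset \partial \varphi(D^+(S))$; taking closure completes $\overline{\varphi(H^+(S))} \subset \partial \varphi(D^+(S))$.

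For $(\subseteq)$: let $q' \in [D^+(\varphi(S)) \cap \partial \varphi(D^+(S))] \setminus \varphi(S)$ and pick a sequence $\varphi(p_n) \to q'$ with $p_n \in D^+(S)$. I would split on whether $(p_n)$ has a cluster point in $M$. If yes, extract a subsequence with $p_n \to p$; then $p \in \overline{D^+(S)} = D^+(S) \cup H^+(S)$ and $\varphi(p) = q'$ by continuity. The option $p \in D^+(S) \setminus S$ would place $q'$ in the interior of $\varphi(D^+(S))$ (since $D^+(S) \setminus S$ is open in $M$), contradicting $q' \in \partial \varphi(D^+(S))$; and $p \in S$ is ruled out by hypothesis. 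Hence $p \in H^+(S)$ and $q' \in \varphi(H^+(S)) \subset \overline{\varphi(H^+(S))}$.

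If on the other hand $(p_n)$ has no cluster point in $M$, then $q' \notin \varphi(\tilde{D}(S))$ (since $\tilde{D}(S)$ is closed in $M$), and inextendibility enters. I would argue by contradiction: suppose $q' \notin \overline{\varphi(H^+(S))}$. Choose a convex normal neighborhood $V$ of $q'$ in $N$ small enough to be disjoint from $\varphi(H^+(S))$, and also from $\varphi(H^-(S))$ (which is accessible because $q' \in I^+(\varphi(S))$ is separated from $\overline{\varphi(H^-(S))}$ by the acausal hypersurface $\varphi(S)$). By isometric rigidity, $\varphi$ extends to a smooth local isometry $\tilde\varphi$ on an open neighborhood $W \supset \tilde{D}(S)$ in $M$, so one can form $M^* := M \sqcup V / \sim$ by identifying $p \sim \tilde\varphi(p)$ on the open set $W \cap \tilde\varphi^{-1}(V)$. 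The disjointness $V \cap \varphi(H(S)) = \emptyset$ and the closedness of $\tilde{D}(S)$ in $M$ remove the standard non-Hausdorff obstructions, producing a genuine spacetime $M^*$; the point $q' \in V \setminus \varphi(\tilde{D}(S))$ supplies a strictly new point, contradicting inextendibility. Hence $q' \in \overline{\varphi(H^+(S))}$.

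The non-emptiness assertion is then immediate: if $M$ is future holed, Definition \ref{gvw} supplies $\tilde q' \in \varphi(H^+(S)) \cap D^+(\varphi(S))$, which by $(\supseteq)$ lies in the LHS. The main obstacle will be the gluing construction in the no-cluster case: extending $\varphi$ across $H(S)$ via isometric rigidity, shrinking $V$ so that the identification region avoids $\varphi(H(S))$, and verifying Hausdorffness (together with time-orientability and connectedness) of $M^*$ — only then does the desired contradiction with inextendibility go through cleanly.
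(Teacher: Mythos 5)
Your overall architecture matches the paper's: $\supseteq$ via continuity of $\varphi$ plus injectivity on $\overline{D^+(S)}$ (your explicit check that $\varphi(S)$ misses $\overline{\varphi(H^+(S))}$ is left implicit in the paper), $\subseteq$ by constructing an extension of $M$ that contradicts inextendibility, and non-emptiness read off from Definition \ref{gvw}. Your Case 1 (cluster point in $M$) is sound, though the dichotomy is not really needed: one can show directly that a point of the left-hand side lying outside $\overline{\varphi(H^+(S))}$ lies outside $\varphi(\tilde D(S))$ altogether, which is what the paper does.

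The genuine gap is in your Case 2, at the step ``by isometric rigidity, $\varphi$ extends to a smooth local isometry $\tilde\varphi$ on an open neighborhood $W\supset\tilde D(S)$ in $M$.'' Rigidity of isometries gives \emph{uniqueness} of such an extension, not \emph{existence}, and in general no extension exists: the metric of $M$ just outside $H(S)$ need not match the metric of $N$ just outside $\varphi(H(S))$, and $N$ may have no isometric room there at all. (The paper's Remark \ref{rem} explicitly notes that demanding $\varphi$ be defined on a neighborhood of $\overline{D(S)}$ is a strictly stronger hypothesis, adopted by Krasnikov but deliberately avoided here.) Fortunately the extension is also unnecessary: since your $V$ is disjoint from $\varphi(H^+(S))$ and $\varphi(H^-(S))$ (and from $\varphi(S)$, lying in $I^+(\varphi(S))$), the preimage $\varphi^{-1}(V)$ is already contained in the open set $D(S)$ and hence is open in $M$, so you may glue $M$ and $V$ along $\varphi^{-1}(V)\leftrightarrow V\cap\varphi(D(S))$ using only the given $\varphi$ --- exactly the paper's construction with its neighborhood $U$. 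Your Hausdorffness sketch then goes through: if $x_n\in\varphi^{-1}(V)$, $x_n\to x$ in $M$ and $\varphi(x_n)\to v\in V$, then $x\in\overline{D(S)}$ and $\varphi(x)=v$; the case $x\in H(S)$ is excluded because $V\cap\varphi(H(S))=\emptyset$, so $x\in\varphi^{-1}(V)$ is identified with $v$. With that repair your argument is correct and essentially coincides with the paper's.
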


\begin{proof}
The definition of future hole implies that the right-hand side is
non-empty for some choice of $S$, $(N,\sigma)$ and $\varphi$.

The set $\varphi(H^+(S))$ has empty intersection with
$\varphi(D^{+}(S))$ because $\varphi: \overline{D^+(S)}\to N$ is
injective. However, by continuity, since $H^{+}(S)\subset
\overline{D^{+}(S)}$ we have $\varphi(H^{+}(S))\subset
\overline{\varphi(D^{+}(S))}$, thus $\varphi(H^{+}(S))\subset
\overline{\varphi(D^{+}(S))} \backslash \varphi(D^{+}(S))\subset
\partial \varphi(D^+(S))$. Using the closure of the last set we get
$\overline{\varphi(H^{+}(S))}\subset
\partial \varphi(D^+(S))$ which proves $\supset$.

For the inclusion $\subset$, suppose that $r\in
D^{+}(\varphi(S))\cap
\partial \varphi(D^+(S))]\backslash \varphi(S)$ and that $r\notin
\overline{\varphi(H^+(S))}$. We can find an open neighborhood $U \ni
r$, $U\subset I^{+}(\varphi(S))\cap D^{+}(\varphi(S))$, such that
$U\cap \varphi(H^+(S))=\emptyset$. Gluing $U$ with $M$ by
identifying $\varphi^{-1}(U)\subset D(S)\subset M$ with
$\varphi(\varphi^{-1}(U))$ through $\varphi$ we get an extension of
$M$, a contradiction.

\end{proof}

\begin{proposition} \label{nja}
An inextendible spacetime $(M,g)$ is {\em hole-free} if and only if
for every partial Cauchy hypersurface
 $S$, and isometry $\varphi: \tilde{D}(S) \to N$, on a spacetime
$(N,\sigma)$, such that $\varphi(S)$ is acausal, we have
$\varphi(D(S))=D(\varphi(S))$.
\end{proposition}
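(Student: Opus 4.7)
The backward direction is the contrapositive of Proposition \ref{muj}: a Cauchy hole in $(M,g)$ would furnish data for which $\varphi(D(S))\subsetneq D(\varphi(S))$, contradicting the assumed equality on every admissible choice. Only the forward direction requires work. My plan is to argue by contradiction, feeding a point of $D(\varphi(S))\setminus\varphi(D(S))$ into Proposition \ref{nrx} in order to manufacture an actual hole. Assume $(M,g)$ is inextendible and hole-free, fix $S,\varphi$ as in the statement, and recall that Proposition \ref{pyn} already yields $\varphi(D(S))\subset D(\varphi(S))$. Suppose, for contradiction, that $r\in D(\varphi(S))\setminus\varphi(D(S))$. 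Since $\varphi(S)\subset\varphi(D(S))$ and, by acausality of $\varphi(S)$, $D^+(\varphi(S))\cap D^-(\varphi(S))=\varphi(S)$, the point $r$ lies in exactly one of $D^\pm(\varphi(S))\setminus\varphi(S)$; the past case being dual, assume $r\in D^+(\varphi(S))\setminus\varphi(S)\subset I^+(\varphi(S))$.

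Pick a past-directed timelike curve $\gamma:[0,1]\to N$ with $\gamma(0)=r$ and $\gamma(1)\in\varphi(S)$. Past-closedness of $D^+(\varphi(S))$ within $J^+(\varphi(S))$---proved by extending any past-inextendible causal curve through an intermediate $\gamma(t)$ upward to $r$ and using acausality of $\varphi(S)$ to force the intersection point to lie on the past of $\gamma(t)$---gives $\gamma([0,1])\subset D^+(\varphi(S))$. Near $\gamma(1)=\varphi(s)$ the map $\varphi$ is a local diffeomorphism since $S\subset D(S)\subset\mathrm{int}(\tilde D(S))$, so the $\varphi$-pullback of $\gamma$ is a short past-timelike curve in $M$ ending at $s\in S$ whose interior sits in $D(S)\cap I^+(S)\subset D^+(S)$; hence $\gamma(t)\in\varphi(D^+(S))$ for $t$ close to $1$.

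The inclusion $D^+(S)\setminus S\subset I^+(S)$---any $p\in D^+(S)\setminus S$ is met by a past-inextendible timelike curve through $S$, forcing $p\in I^+(S)$---combined with openness of $I^+(S)\cap D(S)$ makes $D^+(S)\setminus S=D(S)\cap I^+(S)$ open in $M$, so the open map $\varphi$ renders $\varphi(D^+(S))\setminus\varphi(S)$ open in $N$. With $t^*:=\sup\{t\in[0,1]:\gamma(t)\notin\varphi(D^+(S))\}$ one obtains $0\le t^*<1$, and this openness prevents $r^*:=\gamma(t^*)$ from belonging to $\varphi(D^+(S))$ (otherwise a whole interval around $t^*$ would), yet $r^*$ is a limit from the right of points of $\varphi(D^+(S))$, giving $r^*\in\partial\varphi(D^+(S))$. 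Hence $r^*\in[D^+(\varphi(S))\cap\partial\varphi(D^+(S))]\setminus\varphi(S)$. Proposition \ref{nrx}---where the inextendibility hypothesis is finally consumed---equates this set with $D^+(\varphi(S))\cap\overline{\varphi(H^+(S))}$, and since $r^*$ sits in the open set $D^+(\varphi(S))\setminus\varphi(S)$ and in $\overline{\varphi(H^+(S))}$, every small neighborhood of $r^*$ inside $D^+(\varphi(S))$ meets $\varphi(H^+(S))$, producing $\varphi(h)\in\varphi(H^+(S))\cap D^+(\varphi(S))$ with $h\in H^+(S)$. That is a future Cauchy hole, contradicting hole-freeness. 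The principal obstacle is this transition step: certifying that $r^*$ genuinely lands on $\partial\varphi(D^+(S))$, strictly inside $D^+(\varphi(S))$ and off $\varphi(S)$, because only there does the closure conclusion of Proposition \ref{nrx} upgrade to an honest hole.
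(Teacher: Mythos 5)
Your proof is correct and follows essentially the same route as the paper: the backward direction via Proposition \ref{muj}, and the forward direction by contradiction, running a timelike curve between $\varphi(S)$ and a hypothetical point of $D(\varphi(S))\setminus\varphi(D(S))$, extracting a first-exit point on $\partial\varphi(D^+(S))$ that still lies in $D^+(\varphi(S))\setminus\varphi(S)$, and feeding it into Proposition \ref{nrx} to produce a hole. The only differences are cosmetic (you traverse the curve past-directed and make explicit the openness argument upgrading $\overline{\varphi(H^+(S))}$ to actual points of $\varphi(H^+(S))$ in $D^+(\varphi(S))$, a step the paper leaves implicit).
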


\begin{proof}
$\Leftarrow$. This direction is an easy consequence of Prop.
\ref{muj}.

$\Rightarrow$. If $(M,g)$ has no hole then  for every partial Cauchy
hypersurface $S$ and isometry $\varphi: \tilde{D}(S) \to N$, on a
spacetime $(N,\sigma)$, such that $\varphi(S)$ is acausal we have
$\varphi(H^\pm(S))\cap D^{\pm}(\varphi(S))= \emptyset$. By Prop.
\ref{nrx}  we have $D(\varphi(S))\cap \partial
\varphi(D(S))=\emptyset$. Let us show that it implies
$D(\varphi(S))\subset \varphi(D(S))$. Indeed, suppose by
contradiction that there is $q'\in D(\varphi(S))\backslash
\varphi(D(S))$.
We shall assume $q'\in D^+(\varphi(S))$, the case $q'\in
D^-(\varphi(S))$ being analogous. There is a  future directed
timelike curve $\gamma':[0,1]\to D(\varphi(S))$ which starts from
some point $p'\in \varphi(S)$ and reaches $q'$. Thus $\gamma'$
starts from the open set $\varphi(D(S))$ and reaches a point $q'$
outside it. Thus there is a first point $r=\gamma(b)\in
D^+(\varphi(S))\backslash \varphi(D(S))$, $b\in (0,1]$, hence
belonging to $[\partial \varphi(D^+(S))]\backslash \varphi(S)$.
Observe that $r \in D^{+}(\varphi(S))$ since it belongs to
$J^{-}(q')\cap I^{+}(\varphi(S))$. This fact contradicts
$D(\varphi(S))\cap \partial \varphi(D(S))=\emptyset$ and proves the
inclusion $D(\varphi(S))\subset \varphi(D(S))$. Using Prop.
\ref{pyn} we obtain the equality $\varphi(D(S))=D(\varphi(S))$.
\end{proof}

\begin{remark}[On definitions] \label{rem}
A first definition of hole-free spacetime can be found in Geroch,
\cite{geroch77} who should be credited with the introduction of the
concept. He gives a definition similar to the characterizing
property of Prop. \ref{nja}. In his definition there is no edgeless
condition on $S$ and, furthermore, he uses $\tilde{D}(S)$ in place
of $D(S)$ in the last equality, which causes
 some undesired consequences related to the non openness of
$\tilde{D}(S)$ (see Manchak \cite{manchak09}). Manchak mentions a
possible correction recently suggested by Geroch himself, which
makes the definition almost coincident with the property of Prop.
\ref{nja} but for the fact that: (i) he uses the achronal condition
on $S$, and expresses it using $\tilde{D}(S)$ and, (ii) he misses
the inextendibility condition with the unfortunate consequence that
the region $t<0$ of Minkowski spacetime becomes holed according to
his definition, (iii) $\varphi$ is defined over $D(S)$ instead that
over $\overline{D(S)}$. We choose to define $\varphi$ over
$\overline{D(S)}$ because this choice allows us to skip some
technical issues connected with the extension of the isometry from
the open set $D(S)$ to its closure (by the way, Krasnikov requires
the isometry to be defined in a neighborhood of $\overline{D(S)}$).


Another definition can be found in Clarke \cite{clarke76} where he
credits J. Earman and N. Woodhouse (Earman refers to the hole-free
condition as the {\em determinism maximal} property \cite[Sect.
3.8]{earman95}). However, Clarke uses spacelike hypersurfaces
(without edge) in place of partial Cauchy hypersurfaces. Manchak
points out that Clarke's definition runs into the same problems of
Geroch's. Indeed, Clarke's uses $\tilde{D}(S)$ in his definition
\cite[p.119]{clarke93}. Though he uses edgeless hypersurfaces for
$S$, this fact does not guarantee that $\tilde{D}(S)$ is open.
Krasnikov has also shown that if the free-hole condition is taken as
in Geroch's first paper then even Minkowski's spacetime is
holed.\cite{krasnikov09}

In a way Clarke's definition is more general, as it does not require
any causality condition, not even the acausality of $S$.
Nevertheless, a treatment of this case would require a considerable
generalization of the terminology, together with unnecessary
complications, especially whenever $N$ can be chosen totally
vicious. For instance, according to the usual definition of horizon,
if $N$ is a totally vicious spacetime then $H(\varphi(S))$ is empty
although it could be $D(\varphi(S))\ne N$. Of course, this is not
what one usually expect from a domain of dependence,\cite{hawking73}
as we use to regard the horizon as the boundary of the  Cauchy
development.



\end{remark}

Clarke \cite[Prop. 6.5.1]{clarke93} claims that any inextendible
globally hyperbolic spacetime is hole-free. His proof goes as
follows. He first  considers a globally hyperbolic holed spacetime
from which he builds a potentially branched extension of $M$. Using
global hyperbolicity he argues that the extension is acceptable as
it respects Hausdorffness (thus in the end it is not branched)(this
proof is not  clear to this author, indeed his set $N'$ is closed
and the open sets $\pi(U)$ there defined are not homeomorphic to
subsets of $\mathbb{R}^n$).


%
%
%
%
%
%
%
%
%
%
%
%


We are able to prove the stronger result that inextendible causally
simple spacetimes are hole-free. This theorem justifies the
intuitive claim that ``causal simplicity removes holes from
spacetime''. Physically it means that any Cauchy hole on spacetime
causes some discontinuity in the causal relation.

\begin{theorem} \label{nxp}
Every inextendible and causally simple spacetime is hole-free.
\end{theorem}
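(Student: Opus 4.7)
The plan is to argue by contradiction. Suppose $(M,g)$ is inextendible and causally simple but admits a future hole (the past case is dual), and aim to construct an explicit isometric extension of $M$, contradicting inextendibility. By the definition of hole (cf.\ Proposition \ref{nrx}) there is a partial Cauchy hypersurface $S$, an isometry $\varphi\colon \tilde{D}(S)\to N$ with $\varphi(S)$ acausal, and a point $p\in H^+(S)$ whose image $r:=\varphi(p)$ lies in $D^+(\varphi(S))$. Since $S$ is edgeless, the past-directed null geodesic generator $\gamma$ of $H^+(S)$ through $p$ is past-inextendible in $M$ and disjoint from $S$.

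Next, consider the image $\eta:=\varphi\circ \gamma$, a past-directed null geodesic in $N$ ending at $r$. If $\eta$ were past-inextendible in $N$ then, since $r\in D^+(\varphi(S))$, it would meet $\varphi(S)$, and pulling back by $\varphi^{-1}$ would put a point of $\gamma$ on $S$, a contradiction. Hence $\eta$ acquires a past endpoint $q'\in N$. Moreover $q'\notin \varphi(\tilde{D}(S))$: otherwise $\varphi$, being a local diffeomorphism near a preimage of $q'$, would provide $\gamma$ with a past endpoint in $M$, again contradicting past-inextendibility.

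The final step is to attach a suitable open neighborhood of $q'$ in $N$ to $M$ through the isometric identification $\varphi$, producing a candidate extension. To qualify as a spacetime extension, the glued manifold must be Hausdorff, i.e.\ no branching at $q'$ may occur: any two sequences in $M$ whose $\varphi$-images converge to $q'$ must limit to the same point of the extension. This is where causal simplicity enters essentially. Concretely, choosing $p_n:=\gamma(s_n)$ tending to the past end of $\gamma$, strong causality (implied by causal simplicity) forces $p_n$ to leave every compactum of $M$, while $\varphi(p_n)\to q'$ in $N$. Since $D^+(\varphi(S))$ is an open neighborhood of $r$, past-directed timelike curves from $\varphi(p_n)$ must strike $\varphi(S)$ at points $z_n=\varphi(x_n)$ with $x_n\in S$; using Proposition \ref{nrx} to control when these curves remain in $\varphi(\tilde{D}(S))$, one pulls them back to $M$, yielding $x_n\le p_n$ there. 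The closedness of $J^+\subset M\times M$ afforded by causal simplicity then rigidifies the geometry near the past end of $\gamma$, enforcing Hausdorffness of the glued manifold and producing the desired extension, contradicting inextendibility. The main obstacle lies precisely in this last step: controlling the pull-back of past-timelike curves across the boundary $\partial\varphi(D^+(S))$ characterized in Proposition \ref{nrx}, and exploiting the closed causal relation to rule out branching; this is also the point at which the hypothesis cannot be weakened to causal continuity.
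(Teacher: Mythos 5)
Your opening moves are sound and in fact mirror the beginning of the paper's argument: you take a generator $\gamma$ of $H^+(S)$ through a point $p$ with $r=\varphi(p)\in D^{+}(\varphi(S))$, observe that $\varphi\circ\gamma$ cannot be past inextendible in $N$ (it would then have to meet $\varphi(S)$, which $\gamma$ cannot since $S$ is edgeless), and conclude that it terminates at some $q'$ outside $\varphi(\tilde{D}(S))$. The problem is that the core of your proof --- gluing a neighborhood $U\ni q'$ of $N$ to $M$ along $\varphi$ and invoking causal simplicity to ``enforce Hausdorffness'' of the quotient --- is never actually carried out, and it is exactly where all the difficulty sits; you acknowledge as much (``the main obstacle lies precisely in this last step''). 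Note that $q'$ is a limit of points of $\varphi(H^+(S))$ (namely the images of the generator near its past end), so \emph{every} admissible $U$ meets $\varphi(H^+(S))$. If you identify only along the open set $\varphi^{-1}(U)\cap D(S)$, then any $z\in H^{+}(S)$ with $\varphi(z)\in U$ gives a non-Hausdorff pair: $z$ is a limit of points $w_n\in D^{+}(S)$, these get identified with $\varphi(w_n)\to\varphi(z)$, and $z$ and $\varphi(z)$ remain distinct, inseparable points of the quotient. If instead you identify along $\varphi^{-1}(U)\cap\tilde{D}(S)$, the identification is no longer along an open set and the quotient is not obviously a manifold. This is precisely the obstruction the paper raises against Clarke's gluing proof of the globally hyperbolic case; ``the closedness of $J^{+}$ rigidifies the geometry'' is an assertion, not an argument, at the one point where an argument is required.

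The paper's proof avoids this entirely. Inextendibility is consumed once, in Proposition \ref{nrx}, via a gluing performed only where the neighborhood can be chosen disjoint from $\overline{\varphi(H^{+}(S))}$, so that Hausdorffness is automatic; no global extension is ever built. Instead, one follows the geodesic $\eta'$ backwards from $r$ to its last point $p_0$ in $D^{+}(\varphi(S))\cap\overline{\varphi(H^{+}(S))}$, produces a convex neighborhood $C\ni p_0$ with $J^{-}_{C}(p_0)\setminus\{p_0\}\subset\varphi(D^{+}(S))$, approximates $p_0$ by points $p_N\in\varphi(H^{+}(S))$ whose horizon generators must dip into $J^{-}_{C}(p_0)\cap\varphi(D^{+}(S))$, and from this extracts a pair of points of $M$ lying in $\overline{J^{+}}\setminus J^{+}$ --- a direct contradiction with causal simplicity rather than with inextendibility. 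To salvage your route you would have to supply the missing Hausdorffness argument; as written, the proposal reduces the theorem to an unproven claim that is at least as hard as the theorem itself.
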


\begin{proof}
Let us suppose, by contradiction, that $(M,g)$ is inextendible,
causally simple, and that it has a future hole. Thus there is a
partial Cauchy hypersurface $S$, a spacetime $(N,\sigma)$, and an
isometry $\varphi: \overline{D(S)}\to N$ such that $\varphi(S)$ is
acausal, and
\[
B:=D^{+}(\varphi(S))\cap\overline{\varphi(H^+(S))}=[D^{+}(\varphi(S))\cap
\partial \varphi(D^+(S))]\backslash \varphi(S)
\]
is non-empty (Prop. \ref{nrx}).

We remark that starting from any point of $B$ and moving backwards
along a timelike curve one enters immediately into
$\varphi(D^+(S))$.  Let $r\in D^{+}(\varphi(S))\cap
{\varphi(H^+(S))}$ and let $x=\varphi^{-1}(r)$, $x\in H^{+}(S)$.
There is a past inextendible geodesic $\eta':(0,1]\to N$ ending at
$r$ such that $\eta:=\varphi^{-1}\circ \eta'\vert_{(a,1]}$ is, for
some $a>0$, a past inextendible lightlike generator of $H^{+}(S)$
ending at $x$ (see Fig. \ref{fhole}). Observe that $\eta$ does not
intersect $S$, since $S$ is edgeless, nevertheless, $\eta'$
intersects $\varphi(S)$ as $r\in D^{+}(\varphi(S))$. Thus following
$\eta'$ in the backward direction we should reach a last point $p\in
B$. Now, let $C$ be a convex neighborhood of $p$ such that $C\subset
W\subset D^{+}(\varphi(S))\backslash \varphi(S)$ where $W$ is
causally convex in the globally hyperbolic spacetime
$D(\varphi(S))$. Any point $q\in J^{-}_{C}(p)\backslash \{p\}$
cannot belong to $\overline{\varphi(H^+(S))}$, for this is true for
the points staying in $\eta'\cap J^{-}_{C}(p)$ and if $q$ is not of
this type then it can be connected with a timelike curve $\sigma'$
to $r$ (as it can be connected to $r$ through a piecewise geodesic
causal curve passing through $p$ with a corner at $p$), and hence,
if it were $q\in \overline{\varphi(H^+(S))}$  we could find $q'\in
\varphi(H^+(S))\cap C$ and a timelike curve $\beta'$ which connects
$q'$ with $r$. If $\beta'$ is entirely in the image of $\varphi$
then the existence of $\beta:= \varphi^{-1}\circ \beta'$ contradicts
the achronality of $H^{+}(S)$. If instead, $\beta'$ is not entirely
in the image of $\varphi$ the curve $\varphi^{-1}\circ \beta'$
provides a past inextendible causal curve ending at $x$ and not
intersecting $S$, again a contradiction. In summary, we have shown
that there is a point $p\in B$ and a convex neighborhood $C\ni p$,
$C\subset W\subset D^{+}(\varphi(S))\backslash \varphi(S)$ such that
$[J^{-}_{C}(p)\backslash \{p\}]\cap
\overline{\varphi(H^{+}(S))}=\emptyset$ and hence
$J^{-}_{C}(p)\backslash \{p\} \subset \varphi(D^{+}(S))$. In
particular, $p\notin \varphi(H^+(S))$, for otherwise the image of
the generator of $H^{+}(S)$ passing through $\varphi^{-1}(p)$ would
provide points belonging to $[J^{-}_{C}(p)\backslash \{p\}]\cap
\varphi(H^{+}(S))$.

\begin{figure}[ht]
\centering
\includegraphics[width=11cm]{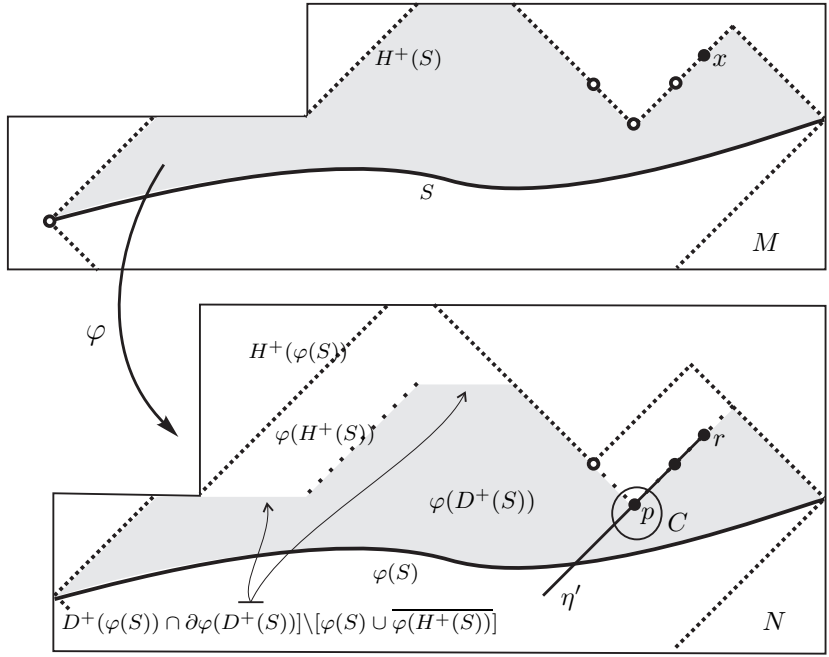}
\caption{An example of future holed spacetime obtained by removing
some sets from Minkowski 1+1 spacetime. This example illustrates the
definition of hole and the proof of theorem \ref{nxp}. The circled
empty points do not belong to the manifold. If the set indicated by
the arrows is non-empty then $M$ can be easily extended.}
\label{fhole}
\end{figure}

Let $p_n \to p$, $p_n \in \varphi(H^+(S))$, and let $\gamma'_n$ be
the past inextendible lightlike geodesic ending at $p_n$ such that
the pullback $\varphi^{-1}\circ \gamma_n'$ (with suitably restricted
domain) is the generator of $H^{+}(S)$ ending at
$\varphi^{-1}(p_n)$. The curves $\gamma_n'$ converge to a past
inextendible curve ending at $p$. Since there is an open set
$V\subset C\cap \varphi(D^+(S))$ of $J^{-}_{C}(p)\backslash \{p\}$,
for sufficiently large $n$ we must have that some $\gamma_n'$
intersects $V$. Let $N$ be such that there is  $b_N\in \gamma_N'\cap
V$. Following $\gamma'_N$ from $p_N$ in the past direction we must
leave $\overline{\varphi(H^+(S))}$ before $b_N$ at some last point
$c$. Point $c$  cannot belong to $\varphi(H^+(S))$ (for otherwise it
would not be the last point), thus $c\in
\overline{\varphi(H^+(S))}\backslash \varphi(H^+(S))$. Then we get a
contradiction with causal simplicity because $(\varphi^{-1}(b_N),
\varphi^{-1}(p_N))\in \overline{J^{+}}\backslash J^{+}$. In order to
show that $(\varphi^{-1}(b_N), \varphi^{-1}(p_N))\in
\overline{J^{+}}$ it is sufficient to consider a sequence of points
$b_N^k\to b_N$, $b_N^k\ll b_N$. We can find timelike curves
$\alpha_k$ connecting $b_N^k$ to $p_N$, thus  entirely contained in
the image of $\varphi$. Their pullback shows that
$\varphi^{-1}(b_N^k) \ll \varphi^{-1}(p_N)$ where
$\varphi^{-1}(b_N^k) \to \varphi^{-1}(b_N)$.


\end{proof}

\begin{remark}
In the previous proof it could be tempting to claim that the pair
$(\varphi^{-1}(e),x)$, where $e$ is the intersection  of $\eta'$
with $S$, belongs to $\overline{J^{+}}\backslash J^{+}$ and provides
a pair of points giving the desired contradiction. Unfortunately,
this simplified strategy does not work because there could be
conjugate points in $\eta'$ between $p$ and $r$. Their presence
would actually imply $r\in I^{+}(e)$.
\end{remark}

Every extendible spacetime is causally geodesically incomplete. If
there are holes we have

\begin{theorem} \label{jse}
Every inextendible future holed spacetime admits a  future lightlike
 incomplete geodesic  and a future timelike  incomplete geodesic.
These geodesics are contained in $D(S)$ and the Riemann tensor, and
its covariant derivatives at any order, evaluated on a parallely
transported base over them have a finite limit.
\end{theorem}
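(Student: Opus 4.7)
The plan is to exploit the geometric structure built in the first half of the proof of Theorem~\ref{nxp}, which, without invoking causal simplicity, produces a point $p\in B=D^{+}(\varphi(S))\cap\overline{\varphi(H^{+}(S))}\setminus\varphi(S)$ together with a convex normal neighborhood $C\ni p$, $C\subset D^{+}(\varphi(S))\setminus\varphi(S)$, enjoying the key property $J^{-}_{C}(p)\setminus\{p\}\subset\varphi(D^{+}(S))$. The central conceptual obstacle is that the most obvious incomplete curve produced by a hole, namely the past-inextendible generator $\eta$ of $H^{+}(S)$ appearing in the proof of Theorem~\ref{nxp}, lies on $H^{+}(S)$ and therefore is not a curve of $D(S)$. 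To overcome this I would first establish the observation $p\notin\varphi(\overline{D(S)})$: since $p=\lim p_{n}$ with $p_{n}\in\varphi(H^{+}(S))$, if $p$ were $\varphi(y_{0})$ for some $y_{0}\in\overline{D(S)}$ then, by continuity of $\varphi^{-1}$ on the image of the isometry, $\varphi^{-1}(p_{n})\to y_{0}$, giving $y_{0}\in\overline{H^{+}(S)}=H^{+}(S)$ and therefore $p\in\varphi(H^{+}(S))$, contradicting the fact already established in the proof of Theorem~\ref{nxp} that $p\notin\varphi(H^{+}(S))$.

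With this observation in hand I would produce both geodesics by a single exponential-map construction. Pick a past-directed timelike vector $v$ at $p$ and set $\sigma'(t):=\exp_{p}(tv)$ on $[0,\epsilon)$ with $\sigma'([0,\epsilon))\subset C$; for $t\in(0,\epsilon)$ one then has $\sigma'(t)\in J^{-}_{C}(p)\setminus\{p\}\subset\varphi(D^{+}(S))$. Reversing the parametrization gives a future-directed timelike geodesic $\tau':[0,\epsilon)\to\varphi(D^{+}(S))$ with $\tau'(t)\to p$ as $t\to\epsilon^{-}$, and pulling back, $\tau:=\varphi^{-1}\circ\tau'$ is a future-directed timelike geodesic in $D^{+}(S)\subset D(S)\subset M$ whose affine parameter is bounded above by $\epsilon$. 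If $\tau(t)$ had a limit $y\in M$ as $t\to\epsilon^{-}$, then $y\in\overline{D(S)}$ (as this set is closed in $M$) and continuity of $\varphi$ would yield $\varphi(y)=p$, contradicting $p\notin\varphi(\overline{D(S)})$; hence $\tau$ is future-inextendible with finite affine parameter, i.e.\ future-incomplete. Taking $v$ instead past-directed lightlike produces, in precisely the same way, a future-incomplete lightlike geodesic $\nu$ in $D(S)$.

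For the curvature bound I would rely on the fact that $\varphi$ is an isometry on $\overline{D(S)}$, so $R_{M}=\varphi^{*}R_{N}$ and, more generally, $\nabla^{k}R_{M}=\varphi^{*}(\nabla^{k}R_{N})$ for every $k\geq 0$. A parallel orthonormal frame $\{e_{i}(t)\}$ along $\tau$ in $M$ is carried by $d\varphi$ to a parallel frame $\{d\varphi(e_{i}(t))\}$ along $\tau'$ in $N$. Because $\tau'$ extends smoothly to the closed interval $[0,\epsilon]$ with $\tau'(\epsilon)=p\in N$, the connection coefficients of $(N,\sigma)$ are smooth along $\tau'([0,\epsilon])$ and the parallel-transport ODE extends continuously up to $t=\epsilon$, yielding a limit frame at $p$. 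The components of $R_{N}$ and of all $\nabla^{k}R_{N}$ in that frame are therefore continuous at $p$ and tend to finite limits; by the isometry, the corresponding components of $R_{M}$ and $\nabla^{k}R_{M}$ along $\tau$ have the same finite limits, and the same reasoning applies verbatim to $\nu$.
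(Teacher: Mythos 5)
Your proposal is correct and follows essentially the same route as the paper: both reuse the point $p$ and the convex neighbourhood $C$ constructed (without invoking causal simplicity) in the proof of Theorem \ref{nxp}, pull back future-directed causal geodesics ending at $p$ through $\varphi^{-1}$ to get the incomplete geodesics inside $D(S)$, and deduce the curvature statement from the convergence of their images to $p$. You additionally spell out why $p\notin\varphi(\overline{D(S)})$ and hence why the pulled-back geodesics are future-inextendible, a point the paper's very brief proof leaves implicit.
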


\begin{proof}


We proceed as in the previous proof using the same notation. Any
causal future directed geodesic ending at $p$  in $N$ provides, once
pulled back to $M$ through $\varphi^{-1}$, the desired incomplete
geodesic. Since all these geodesics have image under $\varphi$ which
converge to $p$, the Riemann tensor and its covariant derivatives
must converge to the value they take on $p$.
\end{proof}

\subsection{Examples} \label{exa}
In this section we give some examples which clarify the independence
of the concept of hole and inextendibility, and the relative
strength of casual continuity, causal simplicity, and hole-freeness.

The dimensionality of the next examples is not really important
because one can always increase the dimension with the direct
product with $\mathbb{R}^k$, $k\ge 1$, with the Euclidean metric. It
is also useful to recall that any non-total imprisoning spacetime
can be made causally geodesically complete and hence inextendible
through multiplication of the metric by a suitable conformal
factor.\cite{beem76}

\begin{example} \label{pom}
{\em An extendible but hole-free spacetime}. The region $t<0$ of
Minkowski spacetime, or the region $t+x<0$ of Minkowski spacetime.
\end{example}

\begin{example} \label{miv}
{\em An inextendible causally continuous spacetime can be holed}.
Let $M$ be Minkowski 3+1 spacetime with the origin $o$ removed. The
spacetime metric is multiplied by a conformal factor different from
unity in $I^{+}(o)$ in such a way that the timelike geodesic $t>0,
x=y=z=0$, becomes geodesically complete in the past direction.
\end{example}

\begin{example} \label{pok}
{\em Another inextendible but holed spacetime}. The covering of 1+1
Minkowski spacetime minus a point.
\end{example}

\begin{example} \label{ouk}
{\em An inextendible causally continuous hole-free spacetime need
not be causally simple}. Plane wave gravitational metrics are known
to be causally continuous but not causally
simple.\cite{penrose65,ehrlich92} Furthermore, they are complete
\cite[Chap. 13]{beem96} thus inextendible \cite[Prop. 6.16]{beem96}
and by theorem \ref{jse} they are hole-free.
\end{example}

\begin{example} \label{our}
{\em An inextendible hole-free causal spacetime need not be causally
continuous}. Some generalized gravitational wave metrics are known
to be non-distinguishing.\cite{hubeny05}
\end{example}

\begin{example} \label{cga}
{\em Some holes can be removed extending the spacetime}. Minkowski
spacetime minus a point.
\end{example}

\begin{example}
{\em A future holed spacetime need not be past holed}. Minkowski 1+1
spacetime minus a future directed lightlike ray.
\end{example}

These examples prove that inextendibility and hole-freeness are
independent concepts (examples \ref{pom}, \ref{miv}, \ref{pok}).
Moreover, they prove that theorem \ref{nxp} is optimal since we
cannot weaken the assumption of causal simplicity to causal
continuity (example \ref{miv}). Finally, they show that the
hole-free property cannot find a place in the causal ladder between
causal simplicity and causal continuity (example \ref{our}), and
that in general hole-freeness does not promote causal properties
(example \ref{our}).

This fact had to be expected because, while the levels of the causal
ladder represent physical conditions on spacetime, the hole-free
property is a mathematical requirement which should be placed on the
very definition of spacetime and which cannot, by itself, improve
causality. Of course, our theorem shows that we could omit to
include the hole-free condition whenever dealing with globally
hyperbolic or causally simple spacetimes. Nevertheless, this
condition is expected to be useful for the study of causally
continuous and stably causal spacetimes, and in general for
spacetimes satisfying weak causality conditions.

\section{Conclusions}

We have introduced a definition of holed spacetime which removes
some undesirable consequences of previous proposals. In particular,
we have observed that a hole should generate an horizon $H(S)$ that
must in part disappear when we pass to a different spacetime
containing $\tilde{D}(S)$. We have then been able to prove that
every inextendible causally simple spacetime is hole-free, thus
confirming the expectation that causal simplicity removes holes from
spacetime. This theorem is optimal as it cannot be improved
weakening causal simplicity to causal continuity (example
\ref{miv}). Physically, this result means that if spacetime contains
some partial Cauchy hypersurface which does not completely develop
its influence, then there is some discontinuity in the causal
relation.

\section*{Acknowledgments}
I thank Alfonso Garc{\'\i}a-Parrado G\'omez-Lobo and the Centro de
Matematica, Universidade do Minho, for kind hospitality. This work
has been partially supported by GNFM of INDAM and by FQXi.


\end{document}